\newcommand{\rmnum}[1]{\romannumeral #1}
\newcommand{\Rmnum}[1]{\expandafter\@slowromancap\romannumeral #1@}
\newtheorem{remark}{Remark}
\newtheorem{theorem}{\bf Theorem}
\newtheorem{lemma}{\bf Lemma}
\newtheorem{definition}{\bf Definition}
\begin{document}


\title{Group Key Agreement Protocol for MANETs Based on HSK Scheme}
\author{Xinyu Lei,~Xiaofeng Liao~\IEEEmembership{Senior Member,~IEEE}, and Yonghong Xiong
\IEEEcompsocitemizethanks{\IEEEcompsocthanksitem Xinyu Lei, Xiaofeng Liao, and Yonghong Xiong are with the State Key Lab. of Power Transmission Equipment \& System Security and New Technology, College of Computer Science, Chongqing University, Chongqing 400044, P. R. China. E-mail: xy-lei@qq.com, xfliao@cqu.edu.cn, xiongyonghong\_cqu@163.com.

}
\thanks{}}

\markboth{}
{Lei \MakeLowercase{\textit{et al.}}: Group Key Agreement Protocol for MANETs Based on HSK Scheme}
\IEEEpubid{}

\IEEEcompsoctitleabstractindextext{%
\begin{abstract}
In this paper, we first provide a spanning tree (ST)-based centralized group key agreement protocol for unbalanced mobile Ad Hoc networks (MANETs). Based on the centralized solution, a local spanning tree (LST)-based distributed protocol for general MANETs is subsequently presented. Both protocols follow the basic features of the HSK scheme: 1) H means that a hybrid approach, which is the combination of key agreement and key distribution via symmetric encryption, is exploited; 2) S indicates that a ST or LSTs are adopted to form a connected network topology; and 3) K implies that the extended Kruskal algorithm is employed to handle dynamic events. It is shown that the HSK scheme is a uniform approach to handle the initial key establishment process as well as all kinds of dynamic events in group key agreement protocol for MANETs. Additionally, the extended Kruskal algorithm enables to realize the reusability of the precomputed secure links to reduce the overhead. Moreover, some other aspects, such as the network topology connectivity and security, are well analyzed.
\end{abstract}

\begin{keywords}
hybrid, spanning tree, extended Kruskal algorithm, group key agreement, MANETs
\end{keywords}}

\maketitle

\IEEEdisplaynotcompsoctitleabstractindextext

\IEEEpeerreviewmaketitle

\section{Introduction}

\IEEEPARstart{M}obile Ad Hoc networks (MANETs) are currently deployed in many areas ranging from military, emergency, rescue mission to other collaborative applications for commercial use. The rapidly growing application of MANETs heightens the security concerns of such networks. Designing secure and efficient group key agreement protocols for MANETs has attracted significant attention. A group key agreement protocol enables a group of participants to communicate over untrusted, open networks to come up with a common secret value called a \emph{session key}. Theoretically, group key establishment is more efficient than pairwise key establishment as the communicating nodes do not waste resources every time they wish to communicate with another device.

Most group key agreement protocols are based on generalization of the two-party Diffie-Hellman (DH) \cite{diffie1976new} protocol. It is proposed by Imgemarsson et al. \cite{ingemarsson1982conference} the first group key agreement protocol. A more efficient BD protocol, which requires only two rounds by using a ring structure of participants, is introduced in \cite{burmester1995secure}.
Subsequently, Steiner et al. \cite{steiner1996diffie} provides a serial of GDH protocols which are called the natural extension of the original DH protocol.
Hypercube protocol is introduced in \cite{becker1998communication} in which the participants are arranged in a logical hypercube. Beside the above \emph{static protocols}, particular attention is paid to design \emph{dynamic protocols} which are capable of handling the membership changes. As a typical example of dynamic protocols, TGDH \cite{kim2004tree} protocol employs tree structure, in which any node can compute the group key if it knows all the keys in its co-path. Unfortunately, this requirement makes the protocol quite expensive in storage and computation. Barus et al. \cite{barua2003extending} presents BDS protocol which is also based on tree structure. It is the only one that makes use of the parings and ternary trees. There have been a tremendous amount of dynamic group key agreement protocols \cite{kim2004constant, bresson2002dynamic, bresson2004mutual, dutta2005constant, kim2000simple, kim2001communication, steiner1998cliques, sherman2003key, dutta2008provably, steiner2000key}. One may refer to \cite{dutta2005overview} for a detailed survey.

Most of the above traditional dynamic protocols are efficient for wired networks but they cannot be directly applied to MANETs because of dynamic and multi-hop nature of mobile nodes. Indeed, most protocols for wired networks have assumed that the communication cost between any pair of group members is one unit, which is not practical in wireless multi-hop MANETs. In contrast, unlike wired networks, the nodes in a MANETs are energy constrained (often powered by batteries) with limited storage and the usage of wireless communication implies a limited bandwidth. As a result, the nodes in MANETs have limited communication, computation, and storage ability. In addition, the global broadcast in MANETs is probably infeasible in MANETs. The membership changes rapidly due to the high mobility of nodes. Moreover, the mobile nodes lack pre-shared knowledge, and the multi-hop wireless links are vulnerable. As a consequence, it remains challenging to design group key agreement protocol in MANETs.
\subsection{Related Work}

In recent few years, some group key agreement protocols using certain structures, e.g., tree or cluster, have shown their superiorities in designing group key agreement protocol in MANETs. They are referred to as \emph{hierarchical protocols} in this paper.

Regarding to the tree-based protocols, a ST network structure is employed in AT-GDH protocol \cite{hietalahti2001efficient}, which realizes that all communications are based on one-hop transmission. But the protocol requires too much exponentiations to derive a common session key which is not computationally efficient in the mobile environment.

The seminal cluster-based group key agreement protocol is first proposed in \cite{li2002efficient}, and then further developed in \cite{abdel2007authenticated, hietalahti2008clustering, konstantinou2008cluster, yao2003making, dutta2011provably}.
The concept of \emph{connected dominating set} is introduced in  \cite{li2002efficient} to handle key agreement protocol. However, the protocol is inefficient to handle dynamic events. There are a few proposals for key agreement protocols by means of using pairing  \cite{abdel2007authenticated, konstantinou2008cluster, shi2006authenticated}. The protocol in \cite{shi2006authenticated} exploit the concept of \emph{virtual backbone node} with leaf nodes as participants. Unlike \cite{shi2006authenticated}, the internal nodes in \cite{abdel2007authenticated, konstantinou2008cluster} are also participants. The pairing-based protocols, however, are computation-intensive which impedes their application in MANETs. Based on AT-GDH, it is presented in \cite{hietalahti2008clustering} a solution which employs BD protocol within each cluster and invokes AT-GDH protocol by employing a ST of sponsors. Unfortunately, this protocol is static and handling dynamic events are not easy for this scheme. Impressively, a hybrid HP-1 protocol is presented in \cite{dutta2011provably}. After the cluster construction, the BD algorithm is invoked to generate the session key within the cluster, and then the root node generates a random session key and sends it to other nodes via the established secure links. It is shown that the HP-1 protocol is very efficient, since it makes use of a hierarchical structure to manage the dynamic events, and employs the symmetric encryption algorithm to distribute the session key from the root node to other nodes. Different with the previous work, the HP-1 protocol takes advantage of the symmetric encryption algorithm which is regarded to be more efficient than the asymmetric encryption algorithm. A detailed survey of the cluster-based algorithms can be found in \cite{yu2005survey}.

After having an overview of these existing hierarchical protocols, especially the cluster-based protocols, main notable merits are identified as follows.
\begin{itemize}
\item A hierarchical structure is adopted to handle the dynamic events efficiently.
\item A hybrid encryption is employed. This approach can reduce the computation overhead, and therefore, it is quite suitable for MANETs with resource-constrained mobile nodes.
\end{itemize}

We also identifies some common drawbacks in these existing hierarchical protocols.
\begin{itemize}
\item	Clustering method is not easy to handle certain member events, such as a cluster head node leaving from the network. More precisely, it is rather costly to use clustering method to deal with the situation that several cluster head nodes leaving from the network at the same time.
\item	Distinct complex algorithms should be carefully designed for handling different kinds of dynamic events. Intuitively, a better approach is to use one uniform algorithm to deal with all these events.
\end{itemize}

In this paper, we aim to design a group key agreement scheme which inherits the above merits and overcomes the above drawbacks.

\subsection{Contributions}

We regard the main contributions of this paper as five-fold.
\begin{enumerate}
\item	The weight function is introduced which jointly considers high communication efficiency and energy balance.
\item	Hybrid encryption technique applies the symmetric encryption algorithm which is regarded to be more efficient than asymmetric encryption algorithm.
\item	The extended Kruskal algorithm is very efficient to handle dynamic events. All kinds of dynamic events are well addressed by only one uniform algorithm, whereas other approaches always involve in designing several sophisticated algorithms to handle different kinds of dynamic events, e.g., \cite{dutta2011provably}.
\item	The extended Kruskal algorithm enables to realize the reusability of the precomputed secure links, and thus reduces the overhead.
\item	There is no global broadcast in the proposed protocols. That is to say, all transmissions are based on one-hop transmission.
\end{enumerate}

\subsection{Organization}
The remainder of the paper proceeds as follows.
Section 2 provides the centralized spanning tree (ST)-based protocol. The distributed local spanning tree (LST)-based protocol is described in Section 3. In Section 4, some properties of the network topology are theoretically studied. The protocol properties and security are analyzed in Section 5 and Section 6, followed by Section 7 which summarizes the paper.

\section{The Centralized ST-based Protocol}

In hierarchical group key agreement protocols, the management of the topology, tree or cluster, can be either centralized or distributed. In the first approach, a member of the group is responsible for the topology maintenance. While the second approach involves all group members in the topology maintenance. Remarkably, the words "centralized" and "distributed" are in terms of the topology maintenance process.
In this section, a centralized solution for a special kind of MANETs is constructed. This solution
 paves a way for better understanding the distributed solution for general MANETs.

In some applications, MANETs can be regarded as unbalanced networks which consist of many resource-constrained mobile nodes called \emph{normal nodes} and a powerful node called \emph{leader node} with less restriction. We refer to this kind of MANETs as \emph{unbalanced} MANETs. Shown in Fig. \ref{fig10} is an example of unbalanced MANETs,
where a tank serves as a leader node and many soldiers work as normal nodes to collect useful battlefield information and transmit it to the tank. Another example is urban vehicular ad hoc networks. It can be regarded as unbalanced MANETs where city central base station works as a leader node and many vehicles act as normal nodes.

We formalize the assumptions of unbalanced MANETs as follows.
\begin{enumerate}[\emph{A}1)]
\item	The leader node covers all normal nodes.
\item	All normal nodes are homogeneous with an identical transmission range.
\end{enumerate}
According to Assumption \emph{A}1, the leader node can transmit message to all normal nodes directly, whereas that normal nodes transmit message to the leader node should rely on mediate nodes. Mathematically, the transmission range of nodes is defined as
\begin{equation*}\label{eq2-20}
D_i\!=\!
\begin{cases}
d_{\mathrm{leader}}\geq \mathrm{max}\{d_{11},\ldots,d_{1n}\},~\mathrm{if}~i=1 \cr
d_{\mathrm{normal}},~\mathrm{if}~i=2,\ldots, n
\end{cases}
\end{equation*}
where $d_{ij}$ denotes the physical distance between node $v_i$ and node $v_j$. $V$ is the set of participant nodes in network, $n=|V|$ is the number of nodes in $V$, and  $(v_i, v_j)$ is the edge that connects node $v_i$ and node $v_j$. A unique ID is assigned to each node. For notational simplicity, $ID(v_i)=i$.

\begin{figure}[t]
\centering
\includegraphics[width=4 cm]{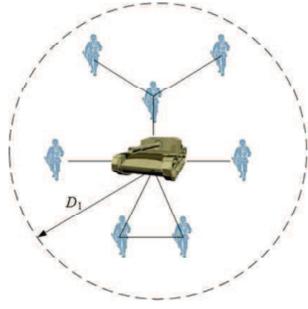}
\caption{An example of unbalanced MANETs.} \label{fig10}
\end{figure}
\begin{definition}(Directly Connected Relationship):
Node $v_i$ has directly connected relationship with node $v_j$, if node $v_i$ can directly transmit message to node $v_j$, i.e., $d_{ij}<D_i$ , denoted as $v_i\rightarrow v_j$. It follows that $v_i\leftrightarrow v_j$, if and only if $v_i\rightarrow v_j$  and  $v_i\leftarrow v_j$.
\end{definition}

Definition 1 and Assumption \emph{A}1 yield $v_1\rightarrow v_i$, $\forall i\in\{2,\ldots,n\}$. And it follows that  $v_i\rightarrow v_j$, where $i\neq j$ and $i,j>1$, if and only if $v_i\leftarrow v_j$. That is to say, the connection between two normal nodes must be bidirectional.

\subsection{Fundamental Procedures}
\subsubsection{Topology Construction}
Normal nodes own the ability of \emph{local broadcast}, i.e., a node can broadcast information to its one-hop away neighbor nodes.

\begin{algorithm}[h]
\caption{\textbf{Procedure T-C} (\emph{Topology Construction})}
\begin{algorithmic}[1]

\STATE Each node locally broadcasts a \texttt{Hello Message}, which contains its unique ID;
\STATE Each node sends an \texttt{ID Message} which contains its ID and its neighbors' IDs to the leader node. In this way, the leader node constructs the network topology.
\end{algorithmic}
\end{algorithm}

\subsubsection{Design and Calculation of Weight Function}
A weight function is introduced to jointly consider high communication efficiency and energy balance. For this purpose, normal node should send extra information (called a\texttt{ Weight Message}), which contains the position and power available of the node, to the leader node. As an example, the weight function can be designed as
\begin{align}
W_{ij}&=f(d_{ij}, PA_i, PA_j)=\notag\\
&\begin{cases}
+\infty,~\mathrm{if}~d_{ij}> d_{\mathrm{normal}}, \cr
M+\alpha\cdot d_{ij}-\beta\cdot\mathrm{min}\{PA_i, PA_j\},~\mathrm{if}~d_{ij}\leq d_{\mathrm{normal}},\notag
\end{cases}
\end{align}
where $W_{ij}$ represents the weight of edge $(v_i, v_j)$, and $PA_i$ ($PA_j$) represents the power available of node $v_i$ ($v_j$). $\alpha$ and $\beta$ are the ratio parameters. A large constant $M$ is selected, such that $M+\alpha\cdot d_{ij}-\beta\cdot\mathrm{min}\{PA_i, PA_j\}>0$.

Note that the smaller $W_{ij}$ of an edge, the higher probability it will be selected in the ST (see the extended Kruskal algorithm for the detailed reason). If $d_{ij}>d_{\mathrm{normal}}$, or equivalently, $v_i$  and $v_j$  are disconnected in network topology, set $W=+\infty$. If $d_{ij}<d_{\mathrm{normal}}$, to construct a communication efficient ST, $W_{ij}$ should be an increasing function of $d_{ij}$, as communication power consumption is, in general, of the form $c\cdot d^r$, $r>2$, i.e., a strictly increasing function of the Euclidean distance. On the other hand, $W_{ij}$ should be a decreasing function of $\mathrm{min}\{PA_i, PA_j\}$. As a result, an edge that connects two energy-rich nodes has higher probability to be selected in the ST. There are three points worth mentioning here. The first is the ratio parameters, i.e., $\alpha$  and $\beta$, should be carefully determined according to the requirements of practical applications. Second, we only do a qualitative analysis of how to design the weight function $f$  by considering only two factors. It is notable that, if required, other factors can also be taken into consideration. The third one is that, if necessary, the weight function can be designed as any nonlinear function which follows our aforementioned qualitative analysis. The procedure to calculate weight for each edge is described below.

\begin{algorithm}[h]
\caption{\textbf{Procedure C-W} (\emph{Calculate Weight})}
\begin{algorithmic}[1]
\STATE Each normal node sends a \texttt{Weight Message} to the leader node;
\STATE The leader node calculates the weight for each edge according to a predefined weight function $f$.
\end{algorithmic}
\end{algorithm}

\subsubsection{ST Construction}
After the execution of \textbf{Procedure T-C} and \textbf{Procedure C-W}, the leader node can obtain the network topology and weight of each edge, on the basis of which the leader node can construct a ST by using Kruskal algorithm \cite{kruskal1956shortest} or Prim algorithm \cite{prim1957shortest}. However, the topology of the constructed ST is unknown to normal nodes in the network. Given this, the ST needs to be further constructed among these normal nodes in the network. We assume that the leader node has constructed a ST. Below is the procedure to construct the ST among the normal nodes.

\begin{algorithm}[h]
\caption{\textbf{Procedure S-C }(\emph{ST Construction})}
\begin{algorithmic}[1]
\STATE The leader node transmits neighbors' IDs (called a \texttt{Notification Message}) to each normal node in the constructed ST. In this way, each normal node is aware of which nodes are its neighbors in the constructed ST;
\STATE For each edge in the ST, two nodes perform a two-party public key exchange algorithm, e.g., DH protocol \cite{diffie1976new}, to generate a pair of key. Evidently, an edge in the ST implies a pairwise public key establishment process, and all of them are executed simultaneously.
\end{algorithmic}
\end{algorithm}

The edges which have performed the public exchange algorithm are referred as \emph{secure links}, and the generated key is referred as \emph{secure link key}. When the pairwise public key establishment is finished, two adjacent nodes in the ST can communicate by using a symmetric encryption algorithm with the secure link key. If we treat the leader node as the root node, then the ST forms a hierarchical tree structure, in which the internal nodes have secure links with their father nodes and their children nodes, and the leaf nodes only have secure links with their father nodes.

\subsubsection{Session Key Distribution}
After the execution of \textbf{Procedure S-C}, a ST has been constructed among all normal nodes. Next, we describe \textbf{Procedure S-K-D}.

\begin{algorithm}[h]
\caption{\textbf{Procedure S-K-D} (\emph{Session Key Distribution})}
\begin{algorithmic}[1]
\STATE The leader node generates a random session key, drawn uniformly from the key space, and sends to its children in the ST encrypted by their secure link key using any symmetric algorithm, e.g., AES or DES (see \cite{menezes2010handbook});
\STATE	All of the one-hop children of the leader node in the first level of the ST decrypt the encrypted message and recover the session key;
\STATE	Each child node reencrypts the recovered session key with the secure link key of its children, and then sends the encrypted message to its children;
\STATE	Repeat Step 3 until all of the leaf nodes in the ST are reached.

\end{algorithmic}
\end{algorithm}

\subsubsection{Topology Maintenance}
\begin{definition}
(Node Event, denoted as $\eta$):
\begin{enumerate}
\item	A Node Adding Event occurs when a single or a group of nodes join the network topology;
\item	A Node Deleting Event occurs when a single or a group of nodes leave the network topology or are forced to leave.
\end{enumerate}
\end{definition}
Different with some related literatures, we unify the definition of a single node adding (deleting) event and a group node adding (deleting) event. With this definition, a membership change can represented by a node event.

\begin{definition}
(Edge Event, denoted as $\varepsilon$):
\begin{enumerate}
\item	An Edge Adding Event occurs when a single or a group of edges are added to the network topology;
\item	An Edge Deleting Event occurs when a single or a group of edges are deleted from the network topology.
\end{enumerate}
\end{definition}
Frequently, an edge event arises from the high mobility of the nodes. The session key should be updated if any node event occurs. But it is not necessary to update the session key if only edge event occurs.

\begin{definition}(Topology $G_k$): The topology, $G_k(k=0,1,\ldots)$, is an undirected graph $G_k=(V_k, G_k) $, where $V_k$ is all the nodes in the network after the occurrence of $k$th node event, and $E_k=\{(v_i,v_j):v_i\leftrightarrow v_j\}$. In particular, $G_0$ is the initial network topology with no node event occurs.
\end{definition}

Our approach to maintain a connected ST is inspired by the basic idea of the original Kruskal algorithm. It is referred to as \emph{extended Kruskal algorithm}. After the usage of extended Kruskal algorithm in topology $G_k$, a connected ST $G_k^+$ can be derived. Hence, we have the following definition.

\begin{definition}(Topology $G_k^+$):
The topology, $G_k^+(k=0,1,\ldots)$, is a connected ST  $G_k^+=(V_k^+,E_k^+)$, where $V_k^+=V_k$, and  $E_k^+$ denotes all the secure links which have been constructed after the extended Kruskal algorithm is applied.
\end{definition}

\begin{definition}(Topology $G_k^-$):
The topology,  $G_k^-(k=0,1,\ldots)$, is an undirected graph $G_k^-=(V_k^-,E_k^-)$, where  $V_k^-=V_k$, and $E_k^-$ denotes all the secure links which have been constructed in $G_{k-1}^+$ . In particular, $G_0^-$ is a graph with isolated nodes.
\end{definition}

With the above terminologies, the evolution of the network topology can be formally described as: after the occurrence of $k$th node event, the leader node reconstructs topology $G_k$  and $G_k^-$. Generally, $G_k^-$ is a separated graph. Then, the extended Kruskal algorithm is employed to construct a new connected ST $G_k^+$ based on $G_k^-$. The procedure to construct $G_k^+$ based on $G_k^-$ is elaborated in \textbf{Procedure T-M}.

\begin{algorithm}[h]
\caption{\textbf{Procedure T-M} (\emph{Topology Maintenance}) (\emph{The Extended Kruskal Algorithm})}
\begin{algorithmic}[1]
\STATE If $G_k^-$  is disconnected, without loss of generality, we assume that $G_k^-$ is separated into $p$  partial trees. Based on $p$  partial trees, in each step, select the minimum previously unselected edge to connect two separated partial trees, two partial trees merge into one.

\STATE Repeat the above step until there is only one partial tree. And finally, we get a new ST $G_k^+$.

\end{algorithmic}
\end{algorithm}

\begin{remark}

\begin{enumerate}
\item The original Kruskal algorithm starts with $n$ isolated nodes, each node is treated as a separated partial tree. The above algorithm is referred to as the extended Kruskal algorithm because it may start with a graph where several edges have been selected, and each connected component is treated as a partial tree.
\item The extended Kruskal algorithm unifies the processes of constructing the initial ST and maintaining a ST. In particular, $k=0$ denotes the initial key establishment process. In the initial key establishment process, the leader node constructs $G_0$ and $G_0^-$ . $G_0^-$ is treated as a graph with isolated nodes, in which case the extended Kruskal algorithm reduces to the original Kruskal algorithm, and  $G_0^-$ is exactly a minimum ST (MST) in $G_0$.

\item The extended Kruskal algorithm preserves the recomputed secure links in $G_k^+$, and thus reduces the computation and communication overhead to construct new secure links.
\end{enumerate}
\end{remark}

\subsection{The Completed Centralized Protocol}

\noindent\rule{255pt}{1pt}

\textbf{The Centralized Protocol}

\vspace{-2mm}

\noindent\rule{255pt}{1pt}
\vspace{-5mm}
\begin{enumerate}[Step 1:]

\item After the occurrence of $k$th node event $\eta_k (k=0,1,\ldots)$, the leader node invokes \textbf{Procedure T-C} to construct the topology $G_k$  and $G_j^-$;

\item The leader node invokes \textbf{Procedure C-W} to update the weight for each edge in $G_k^-$;

\item If $G_k^-$ is connected, then set $G_k^+=G_k^-$, go to Step 6;

\item	If $G_k^-$ is disconnected, the leader node invokes \textbf{Procedure T-M} to construct a new ST  $G_k^+$ based on $G_k^-$ in $G_k$;
\item	The leader node invokes \textbf{Procedure S-C} to construct the ST $G_k^+$ among all normal nodes;

\item	Based on $G_k^+$, the leader node invokes \textbf{Procedure S-K-D} to update a session key.
\end{enumerate}
\vspace{-3mm}
\noindent\rule{255pt}{1pt}

\begin{remark}

\begin{enumerate}
\item    The weight is updated every time when a node event occurs, which implies that the weight is time-variant.
\item $k=0$ denotes the initial key establishment process, in which the extended Kruskal algorithm reduces to the original Kruskal algorithm. In this case, the leader node may alternatively employ Prim algorithm to construct a MST. It is shown that the Kruskal algorithm is appropriate for a sparse graph, whereas it is better to use the Prim algorithm for a dense graph.

\item $G_k^+(k=1,2,\ldots)$ is not necessarily a MST in $G_k$.

\item Between two node events, there may be several edge events which may also destroy the connectivity of $G_k^-$. Other existing hierarchical protocols are difficult to handle the occurrence of edge events. Fortunately, our algorithm is capable of addressing this situation efficiently. In \textbf{Procedure T-C}, the leader node considers both node events and edge events which may lead to the topology change, and then reconstructs $G_k$ and $G_k^-$.
\end{enumerate}
\end{remark}

\section{The Distributed LST-based Protocol}
The above centralized protocol is appropriate for unbalanced MANETs which owns many strict assumptions. In this section, we relax these strict assumptions and let all of the nodes in MANETs be homogeneous with limited abilities and resources. Our goal is to design a group key agreement protocol in a general MANETs environment without making additional assumptions on the availability of any supporting infrastructure. We assume that all nodes in MANETs are identical, the transmission range of all nodes is denoted as $d_\mathrm{max}$.
\begin{definition}
(Topology $\overline{G}_k$): The topology,  $\overline{G}_k (k=0,1,\ldots)$, is an undirected graph $\overline{G}_k=(\overline{V}_k,\overline{E}_k)$, where $\overline{V}_k$ is all the nodes in the network after the occurrence of $k$th node event, and $\overline{E}_k=\{(v_i,v_j): d_{ij}\leq d_{\mathrm{max}},v_i,v_j\in \overline{V}_k\}$  is the edge set of $\overline{G}_k$. In particular, $\overline{G}_0$ is the initial network topology with no node event occurs.
\end{definition}

\begin{definition}
(Neighborhood Subgraph): $N^i(\overline{G}_k)$ is the set of nodes that node $v_i$ has directly connected relationship in $\overline{G}_k$, i.e., $N^i(\overline{G}_k)=\{v_j\in \overline{V}_k:d_{ij}\leq d_{\mathrm{max}}\}$. For an arbitrary node $v_i\in \overline{V}_k$, let $\overline{G}^i_k=(\overline{V}^i_k,\overline{E}^i_k)$  be the induced subgraph of $\overline{G}_k$ such that $\overline{V}^i_k=N^i(\overline{G}_k)$. $\overline{G}^i_k=(\overline{V}^i_k,\overline{E}^i_k)$ is called the neighborhood subgraph of node $v_i$.
\end{definition}

For a node $v_i$, in its neighborhood subgraph $\overline{G}_k^i$, it satisfies Assumption \emph{A}1 that we formalized for the centralized protocol. Because of this, each node can be treated as a \emph{local leader node} in its neighborhood subgraph. Our solution, inspired by \cite{li2003design}, is that each node constructs and maintains its LST in its neighborhood subgraph. Let $LST^i_k$ denote the constructed LST in $\overline{G}_k^i$. Then, the superposition of $LST_k^i (i=1,\ldots,n)$ forms the topology $\overline{G}_k^+$.

\begin{definition}
(Topology $\overline{G}_k^+$): The topology $\overline{G}_k^+(k=0,1,\ldots)$, is a subgraph of $\overline{G}_k$, $\overline{G}_k^+=(\overline{V}^+_k,\overline{E}^+_k)$, where $\overline{V}^+_k=\overline{V}_k$, and $\overline{E}^+_k$ denotes the superposition of edges in $LST_k^i (i=1,\ldots,n)$ for all nodes in $\overline{G}_k$. The word "superposition" means that an edge $e\in \overline{E}_k^+$, if and only if $e\in LST_k^i$,~~$\exists i\in\{1,\ldots,n\}$.
\end{definition}

\noindent\rule{255pt}{1pt}

\textbf{The Distributed Protocol}

\vspace{-2mm}

\noindent\rule{255pt}{1pt}
\vspace{-5mm}
\begin{enumerate}[Step 1:]

\item After the occurrence of $k$th node event $\eta_k (k=0,1,\ldots)$, each local leader node locally broadcasts a \texttt{Hello Message}. The \texttt{Hello Message} should contain its unique ID. In this way, the leader node constructs the topology of its neighborhood subgraph;
\item Each local leader node locally broadcasts a \texttt{Weight Message}, which is subsequently applied to calculate the weight of each edge in its neighborhood subgraph;

\item Each local leader node invokes \textbf{Procedure T-M} to maintain a LST, the LST in $\overline{G}_k^i$  is denoted as  $LST_k^i$. Then, the superposition of  $LST_k^i (i=1,\ldots,n)$ forms the topology $\overline{G}_k^+$;

\item	Construct $\overline{G}_k^+$ in the network. The construction process is similar to \textbf{Procedure S-C};

\item A randomly selected node generates a new session key and distributes it based on $\overline{G}_k^+$. The distribution process is similar to \textbf{Procedure S-K-D}.
\end{enumerate}
\vspace{-3mm}
\noindent\rule{255pt}{1pt}

\begin{remark}
Compared with the centralized protocol, there are several subtle differences in the distributed one.
\begin{itemize}

\item In Step 3, the original one global leader node is replaced by several local leader nodes. The extended Kruskal algorithm is performed by several local leader nodes in a distributed way. Consequently, a node event may cause several local leader nodes to reconstruct their LSTs in the distributed protocol, whereas there is only one reconstruction of the ST in the centralized protocol.
\item In Step 3, the topology $\overline{G}_k^+$ may not be a ST. In general, $\overline{G}_k^+$ is a connected graph with several redundant secure links. See Theorem \ref{theorem10} for the connectivity proof.
\item In Step 5, it is clear that a connected graph $G_k^+$  can also ensure the distribution of session key. If a node receives several encrypted session key messages from different nodes, it only forwards the session key the first time it receives the message.
\end{itemize}
\end{remark}

The verbal description of the above protocol is as follows.
After the occurrence of a node event, each local leader node in the network reconstructs a LST in its neighborhood subgraph independently using the extended Kruskal algorithm which is similar to the centralized protocol. Once $LST_k^i (i=1,2,\ldots)$ are reconstructed, a new connected graph $\overline{G}_k^+$  is derived based on which a new session key is distributed via secure links.

\section{Properties of the Topology $\overline{G}_k^+$}

\begin{figure*}[t]
\centering
\subfigure[The network topology $G_k^+$, $d_{\mathrm{max}}\!=\!4$]{
\label{figa}
\includegraphics[width=7 cm]{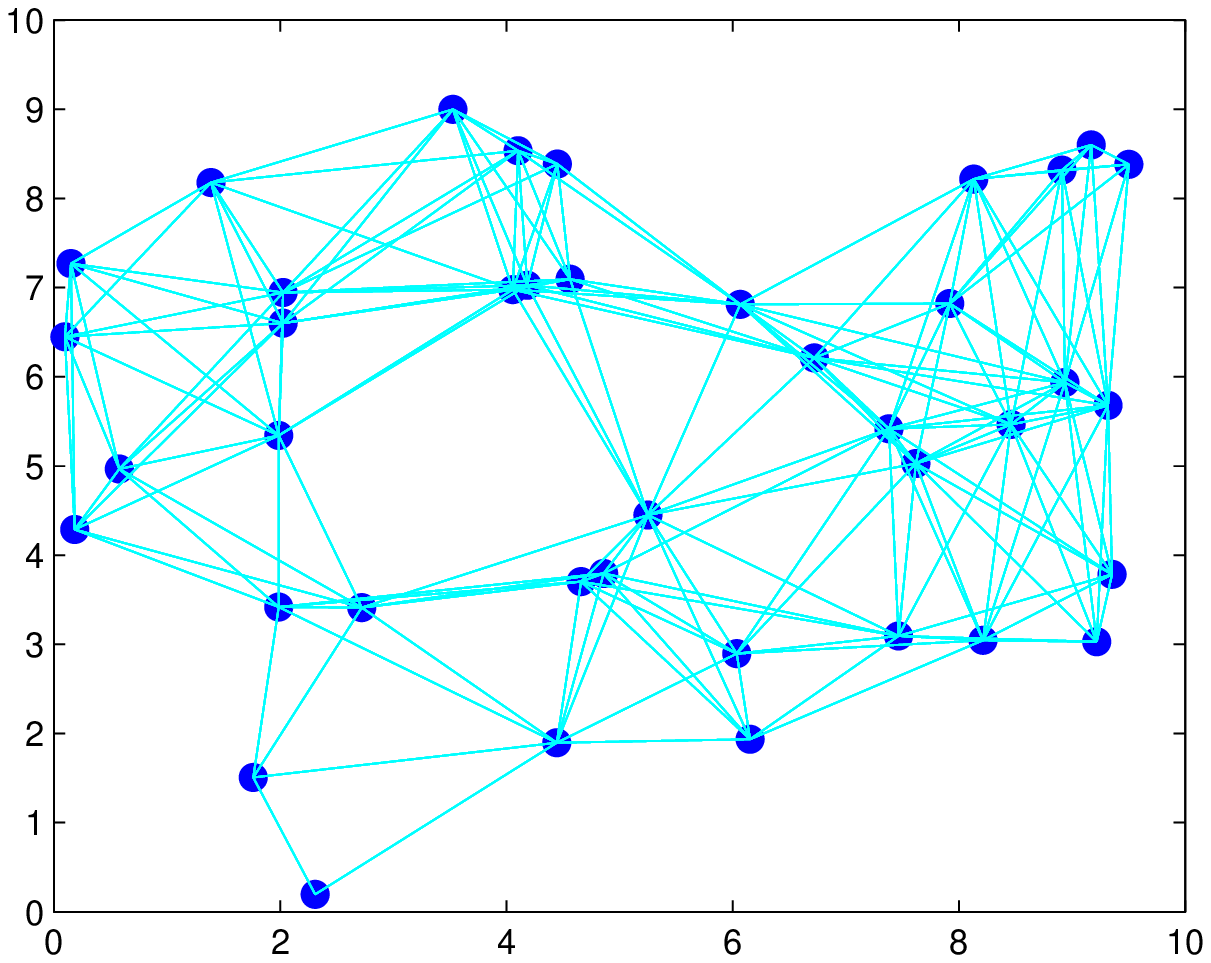}
}~~~~~
\subfigure[The corresponding $\overline{G}_k^+$  derived under LMST, $d_{\mathrm{max}}\!\!=\!\!4$]{
\label{figb}
\includegraphics[width=7 cm]{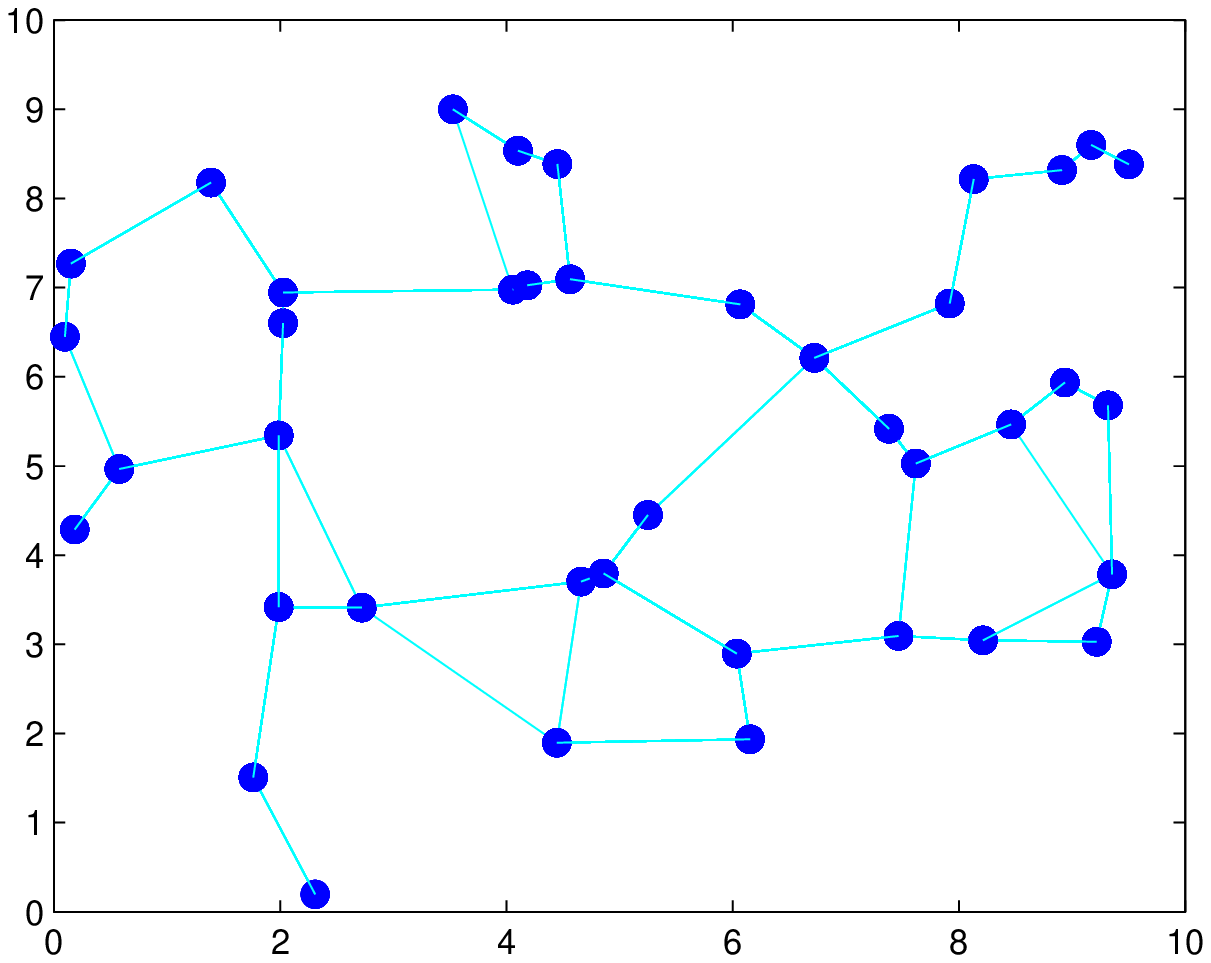}
}
\subfigure[The network topology $G_k^+$, $d_{\mathrm{max}}\!=\!7$]{
\label{figc}
\includegraphics[width=7 cm]{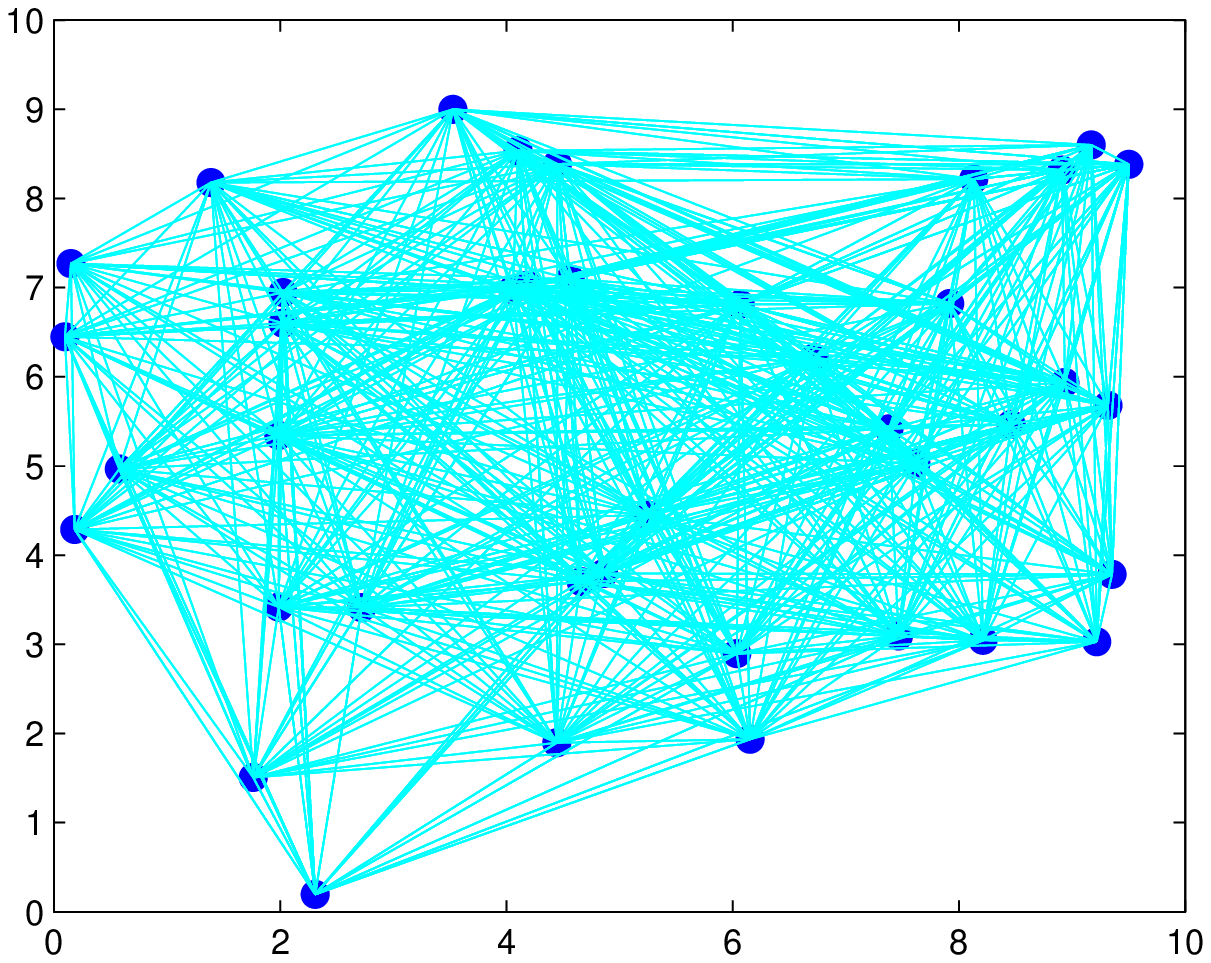}
}~~~~~
\subfigure[The corresponding $\overline{G}_k^+$  derived under LMST, $d_{\mathrm{max}}\!\!=\!\!7$]{
\label{figd}
\includegraphics[width=7 cm]{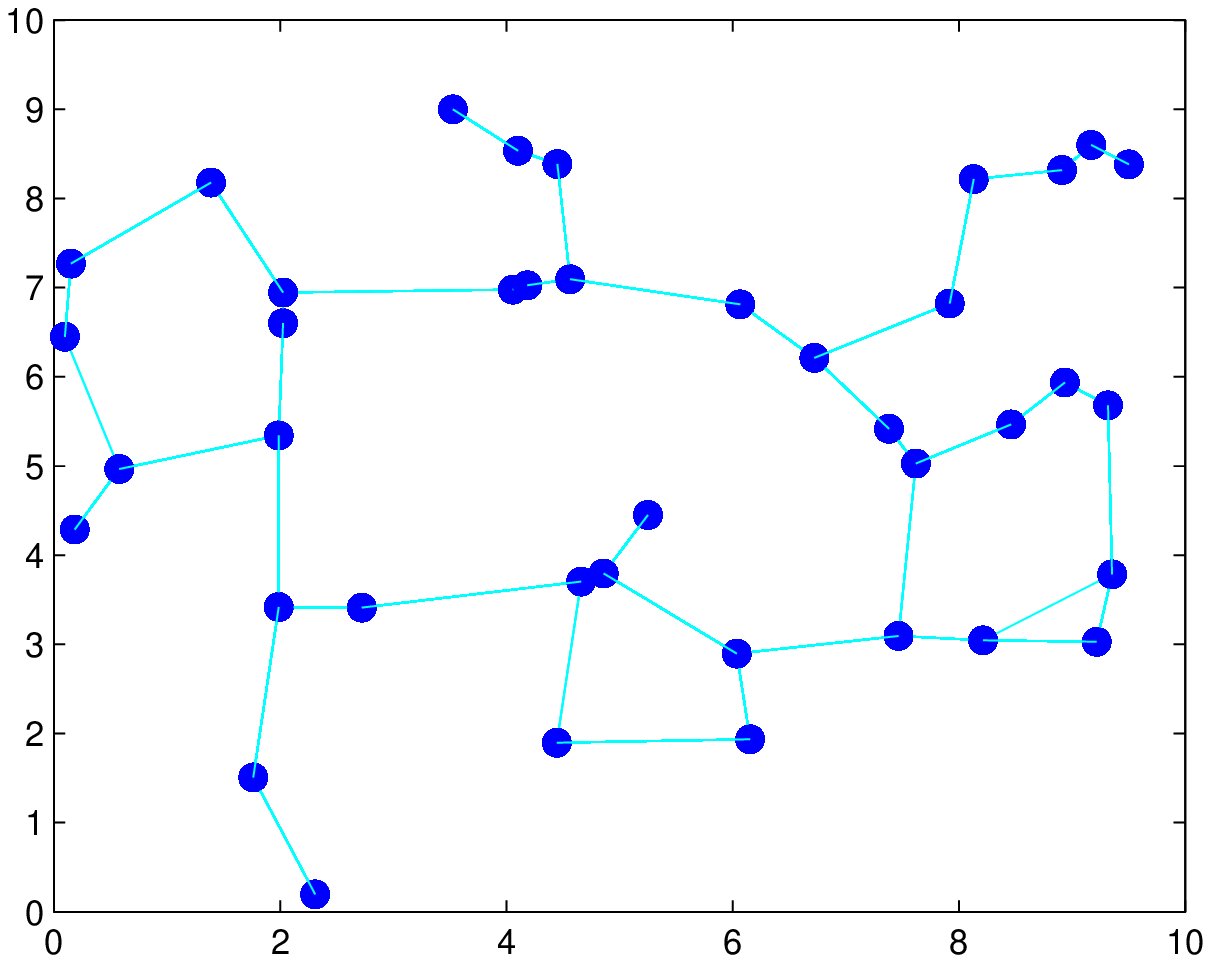}
}
\subfigure[The network topology $G_k^+$, $d_{\mathrm{max}}\!=\!10$]{
\label{fige}
\includegraphics[width=7 cm]{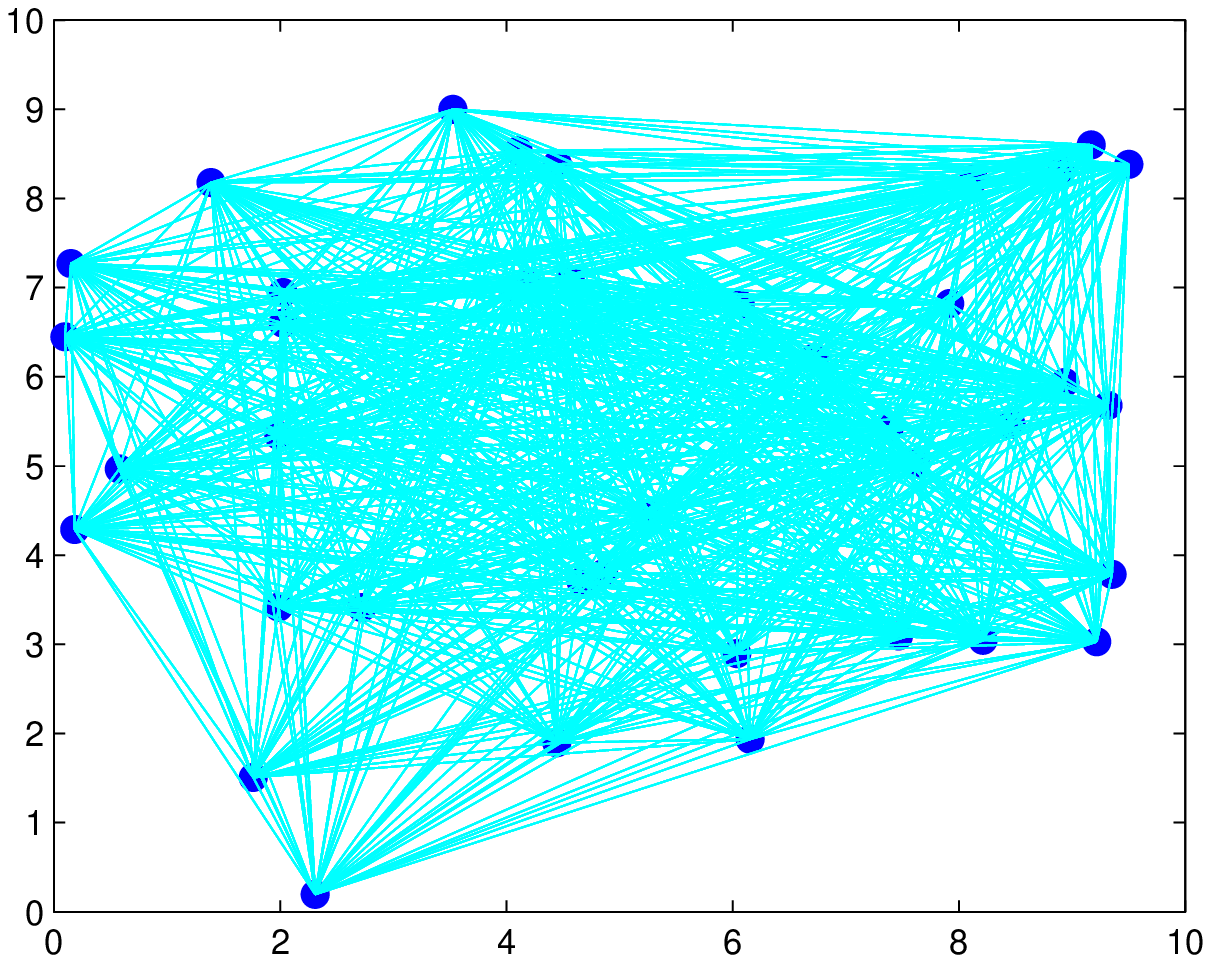}
}~~~~~
\subfigure[The corresponding $\overline{G}_k^+$  derived under LMST, $d_{\mathrm{max}}\!\!=\!\!10$]{
\label{figf}
\includegraphics[width=7 cm]{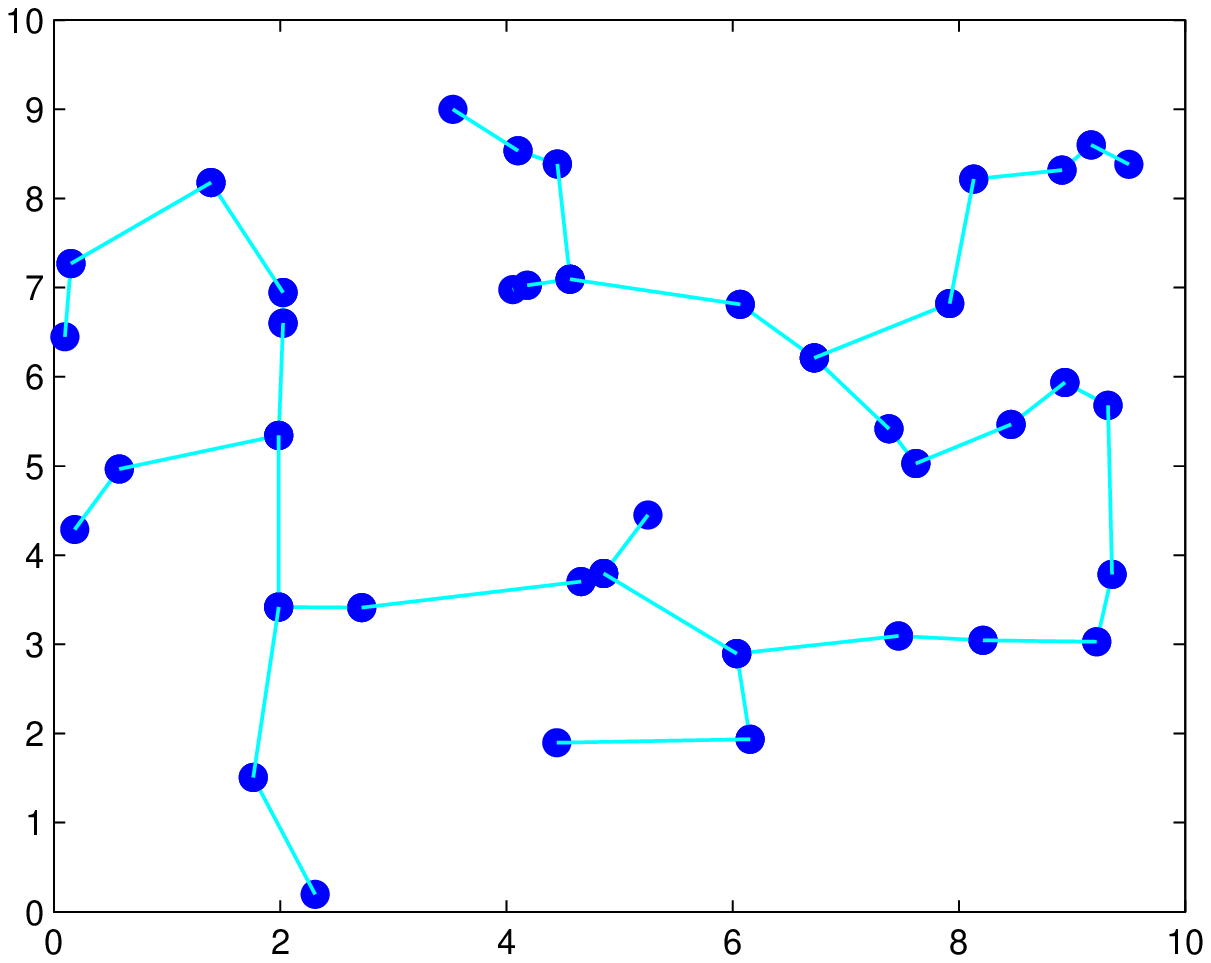}
}

\caption{There are 40 nodes which are generated randomly in a $10\times 10$ square area.}
\label{fig20}
\end{figure*}

In Remark 3, $\overline{G}_k^i$ is claimed to be a connected graph. The proof is given below.

\begin{definition}
(Network Connectivity): For any two nodes $v_i,v_j\in V(G)$, node $v_i$ is said to be connected to node $v_j$  (denoted as $v_i\Leftrightarrow v_j$) if there exists a path ($v_i=w_0,\ldots,w_m=v_j$)  such that $w_j\leftrightarrow w_{j+1}$ , $j=0,\ldots,m-1$, where $w_k\in V(G)$, $k=0,\ldots,m$. It follows that if $u\Leftrightarrow v$ and $u\Leftrightarrow w$, then $v\Leftrightarrow w$.
\end{definition}

\begin{lemma}\label{lemma10}
For any two nodes $v_p,v_q\in \overline{V}_k$, if $v_p,v_q\in LST_k^i$, i.e., $v_p,v_q$ are on an identical ST, then $v_p\Leftrightarrow v_q$.
\end{lemma}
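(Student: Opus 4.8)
The statement to prove is Lemma~\ref{lemma10}: if two nodes $v_p, v_q$ both lie on the same local spanning tree $LST_k^i$, then they are connected in the sense of Definition~\ref{def:network-connectivity}, i.e.\ $v_p \Leftrightarrow v_q$.

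My plan is to exploit the fact that $LST_k^i$ is, by construction, a \emph{tree} on the vertex set $\overline{V}_k^i = N^i(\overline{G}_k)$ (more precisely it is the connected ST that \textbf{Procedure T-M} produces inside the neighborhood subgraph $\overline{G}_k^i$). First I would observe that because $LST_k^i$ is a spanning tree of $\overline{G}_k^i$, any two of its vertices are joined by a (unique) path $v_p = w_0, w_1, \ldots, w_m = v_q$ all of whose edges $(w_j, w_{j+1})$ belong to $LST_k^i$. Next I would argue that every edge of $LST_k^i$ is a \emph{secure link}, hence in particular an edge of $\overline{G}_k^i \subseteq \overline{G}_k$, so each consecutive pair satisfies $d_{w_j w_{j+1}} \le d_{\mathrm{max}}$, which by the homogeneity of transmission ranges gives $w_j \leftrightarrow w_{j+1}$ for every $j = 0, \ldots, m-1$. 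That is exactly the defining condition for a path witnessing $v_p \Leftrightarrow v_q$ in Definition~\ref{def:network-connectivity}, so the lemma follows.

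The one genuinely delicate point — the step I expect to be the main obstacle — is justifying that a path living inside the \emph{local} tree $LST_k^i$ (whose edges are certified only within node $v_i$'s neighborhood subgraph) is still a valid connectivity path in the \emph{global} graph $\overline{G}_k$. This needs two remarks: (i) the neighborhood subgraph $\overline{G}_k^i$ is an \emph{induced} subgraph of $\overline{G}_k$, so any edge of $LST_k^i$ is literally an edge of $\overline{G}_k$ with the same endpoints and the same physical distance constraint $d_{ij}\le d_{\mathrm{max}}$; and (ii) the relation $w_j \leftrightarrow w_{j+1}$ only depends on that distance constraint (both nodes have transmission range $d_{\mathrm{max}}$), not on anything local to $v_i$. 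Once these are spelled out, there is no gap. I would also note the trivial boundary case $v_p = v_q$ (take the length-zero path), and that $LST_k^i$ being connected is precisely what \textbf{Procedure T-M} guarantees (it terminates with a single partial tree), so the path exists.

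In short, the proof is essentially: "a spanning tree is connected, its edges are real edges of the ambient graph, and real edges realize the $\leftrightarrow$ relation." The write-up should be three or four sentences, invoking Definition~\ref{def:network-connectivity}, the construction of $LST_k^i$ via \textbf{Procedure T-M}, and the induced-subgraph observation for the neighborhood subgraph. No computation is required.
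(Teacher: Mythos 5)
Your proof is correct: the paper actually states this lemma without any proof at all (the only proof block in that section belongs to Lemma 2, which in turn cites this lemma), so your argument fills in exactly the reasoning the authors implicitly rely on — a spanning tree is connected, its edges are genuine edges of the ambient graph $\overline{G}_k$ satisfying $d_{ij}\leq d_{\mathrm{max}}$ and hence realize $w_j\leftrightarrow w_{j+1}$, and concatenating them gives the path required by the definition of $v_p\Leftrightarrow v_q$. No discrepancy with the paper, since there is nothing in the paper to diverge from.
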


\begin{lemma}\label{lemma20}
For any two nodes $v_p,v_q\in\overline{V}_k$, if $d_{pq}<d_{\mathrm{max}}$, then $v_p\Leftrightarrow v_q$.
\end{lemma}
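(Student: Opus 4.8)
The plan is to deduce Lemma~\ref{lemma20} from Lemma~\ref{lemma10} by exhibiting one single local spanning tree that contains \emph{both} $v_p$ and $v_q$: once we know $v_p,v_q\in LST_k^i$ for a common index $i$, Lemma~\ref{lemma10} immediately yields $v_p\Leftrightarrow v_q$. The natural candidate is $i=p$, namely the LST that node $v_p$ itself maintains on its neighborhood subgraph $\overline{G}_k^p$.

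First I would unwind the hypothesis: $d_{pq}<d_{\mathrm{max}}$ says exactly that $v_q\in N^p(\overline{G}_k)=\overline{V}_k^p$, and since $d_{pp}=0\le d_{\mathrm{max}}$ we also have $v_p\in\overline{V}_k^p$; thus both $v_p$ and $v_q$ are vertices of $\overline{G}_k^p$. Next I would verify that $LST_k^p$ is a genuine spanning tree of $\overline{G}_k^p$, i.e., that its vertex set is all of $\overline{V}_k^p$. Here I would invoke the observation (made just before the definition of the neighborhood subgraph) that $v_p$ plays the role of the local leader node in $\overline{G}_k^p$, so by Assumption~\emph{A}1 specialised to that subgraph it has a directly connected relationship with every other node of $\overline{V}_k^p$. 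Hence $\overline{G}_k^p$ is connected --- the star centred at $v_p$ already spans it --- and when the extended Kruskal algorithm of \textbf{Procedure T-M} is run on $\overline{G}_k^p$ it keeps merging partial trees until exactly one remains, so the resulting $LST_k^p$ covers all of $\overline{V}_k^p$. In particular $v_p,v_q\in LST_k^p$, and applying Lemma~\ref{lemma10} with $i=p$ completes the argument.

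I do not anticipate a deep obstacle; the one step that genuinely uses the structure of the model is the claim that $LST_k^p$ spans \emph{all} of $\overline{G}_k^p$ rather than being merely a spanning forest of it. This is exactly where the ``local leader covers its entire neighborhood'' property enters, since it guarantees connectivity of $\overline{G}_k^p$ and hence termination of \textbf{Procedure T-M} with a single component; I would therefore spell that step out explicitly rather than treat it as self-evident.
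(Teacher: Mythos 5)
Your argument is correct and essentially identical to the paper's own proof, which notes that $d_{pq}<d_{\mathrm{max}}$ puts both nodes in a common neighborhood subgraph (the paper uses $LST_k^q$ where you use $LST_k^p$, an immaterial difference by symmetry of $d_{pq}$) and then invokes Lemma~\ref{lemma10}. You additionally spell out why the LST spans the entire neighborhood subgraph via the local-leader/star connectivity observation, a step the paper dismisses as obvious, so your write-up is if anything more complete.
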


\begin{proof}
If $d_{pq}<d_{\mathrm{max}}$, it is obvious that $v_p\in N^q(\overline{G}_k)$, then  $v_p,v_q\in LST_k^q$. It holds from Lemma \ref{lemma20} that $v_p\Leftrightarrow v_q$.
\end{proof}

\begin{theorem}\label{theorem10}
$\overline{G}_k^+$ preserves the connectivity of $\overline{G}_k$  , i.e., $\overline{G}_k^+$ is connected if $\overline{G}_k$ is connected.
\end{theorem}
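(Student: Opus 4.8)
The plan is to show that connectivity of $\overline{G}_k$ transfers to $\overline{G}_k^+$ by exhibiting, for every edge of $\overline{G}_k$, a connecting path in $\overline{G}_k^+$. Concretely, take any two nodes $v_p, v_q$ that are connected in $\overline{G}_k$, so there is a path $v_p = w_0, w_1, \ldots, w_m = v_q$ with $w_j \leftrightarrow w_{j+1}$ in $\overline{G}_k$ for each $j$. For each consecutive pair we have $d_{w_j w_{j+1}} < d_{\mathrm{max}}$, so by Lemma~\ref{lemma20} we get $w_j \Leftrightarrow w_{j+1}$ inside $\overline{G}_k^+$ (note that $\Leftrightarrow$, as used in Lemma~\ref{lemma20} via Lemma~\ref{lemma10}, refers to connectivity through edges of the LSTs, i.e.\ through $\overline{E}_k^+$). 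Then I would invoke the transitivity clause of Definition~13 (Network Connectivity) repeatedly along the path: from $w_0 \Leftrightarrow w_1$ and $w_1 \Leftrightarrow w_2$ conclude $w_0 \Leftrightarrow w_2$, and so on inductively until $w_0 \Leftrightarrow w_m$, i.e.\ $v_p \Leftrightarrow v_q$ in $\overline{G}_k^+$. Since $v_p, v_q$ were an arbitrary pair that is connected in $\overline{G}_k$, and $\overline{G}_k$ is assumed connected (so every pair qualifies), $\overline{G}_k^+$ is connected.

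The key steps in order: (1) fix $v_p, v_q \in \overline{V}_k$ and, using connectedness of $\overline{G}_k$, pick a path realizing $v_p \Leftrightarrow v_q$ in $\overline{G}_k$; (2) observe each edge $w_j \leftrightarrow w_{j+1}$ on that path satisfies the hypothesis $d_{w_j w_{j+1}} < d_{\mathrm{max}}$ of Lemma~\ref{lemma20}, hence $w_j \Leftrightarrow w_{j+1}$ within $\overline{G}_k^+$; (3) chain these together by induction on $m$ using the transitivity of $\Leftrightarrow$ stated in Definition~13; (4) conclude $v_p \Leftrightarrow v_q$ in $\overline{G}_k^+$, and since the pair was arbitrary, $\overline{G}_k^+$ is connected. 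Because $\overline{V}_k^+ = \overline{V}_k$ by Definition~12, there is no issue of missing vertices — only the edge set shrinks, and the argument shows no connectivity is lost.

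The main subtlety — not really an obstacle, but the point that needs care — is making sure the relation $\Leftrightarrow$ in Lemmas~\ref{lemma10} and~\ref{lemma20} is being read as connectivity \emph{in $\overline{G}_k^+$} (through LST edges), not in $\overline{G}_k$; otherwise the theorem would be vacuous. Lemma~\ref{lemma10} guarantees that any LST $LST_k^i$ is itself a connected subgraph of $\overline{G}_k^+$, and Lemma~\ref{lemma20} then upgrades ``$v_p$ within transmission range of $v_q$'' to ``$v_p$ reachable from $v_q$ through $\overline{E}_k^+$'' by noting $v_p \in N^q(\overline{G}_k)$ so both lie on $LST_k^q$. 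The only remaining thing to verify is that the boundary case $d_{pq} = d_{\mathrm{max}}$ versus $d_{pq} < d_{\mathrm{max}}$ is consistent with the definition of $\overline{E}_k$ (which uses $\leq$); since the edges of $\overline{G}_k$ are exactly pairs with $d_{ij} \leq d_{\mathrm{max}}$ and $N^i(\overline{G}_k)$ uses the same bound, one should state Lemma~\ref{lemma20} (and apply it) with the same inequality convention, so that every edge of the chosen path does fall under its hypothesis. With that bookkeeping settled, the proof is a short induction and requires no computation.
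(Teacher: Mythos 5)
Your proof is correct and follows essentially the same route as the paper's: take a path realizing connectivity in $\overline{G}_k$, apply Lemma~\ref{lemma20} to each hop to get $w_j \Leftrightarrow w_{j+1}$ through $\overline{E}_k^+$, and chain by the transitivity of $\Leftrightarrow$. Your added care about the $d_{pq} < d_{\mathrm{max}}$ versus $d_{ij} \leq d_{\mathrm{max}}$ convention is a real (if minor) mismatch that the paper's own proof silently glosses over, so your bookkeeping remark is a small improvement rather than a divergence.
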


\begin{proof}
Suppose that $\overline{G}_k=(\overline{V}_k,\overline{E}_k)$ is connected. For any two nodes $v_p,v_q\in \overline{V}_k$, there exists at least one path ($v_p=w_0,\ldots,w_m=v_q$) from $v_p$ to $v_q$, where $(w_i,w_{i+1})\in \overline{E}_k$, $i=0,\ldots, m-1$ and $d(w_i,w_{i+1})\leq d_{\mathrm{max}}$. Since $w_i\Leftrightarrow w_{i+1}$, by Lemma \ref{lemma20}, we have $v_p\Leftrightarrow v_q$.
\end{proof}

Theorem \ref{theorem10} ensures that $\overline{G}_k^+$ is a connected graph. However, the topology $\overline{G}_k^+$ may not be a ST, there may be several redundant secure links in $\overline{G}_k^+$.

The superposition of  $LST_k^i (i=1,\ldots,n)$ forms the topology $\overline{G}_k^+$. If all of $LST_k^i(i=1,\ldots,n)$ are LMSTs, we say that the topology $\overline{G}_k^+$ is \emph{derived under LMST}.

\begin{theorem}\label{theorem20}

\rmnum{1})	If network topology $\overline{G}_k$ is a ST, the extended Kruskal algorithm is applied in a distributed way to derive $\overline{G}_k^+$. Then $\overline{G}_k^+$ is exactly a MST in $\overline{G}_k$.

\rmnum{2})	If $d_{\mathrm{max}}\geq \mathrm{max}\{d_{ij}\}$, where $i,j=1,\ldots,n$, or equivalently, the network topology $\overline{G}_k$ is a complete graph, and the corresponding $\overline{G}_k^+$ is derived under LMST, then $\overline{G}_k^+$ is exactly a MST in $\overline{G}_k$.

\end{theorem}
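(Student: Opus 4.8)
The plan is to treat the two parts separately, in each case reducing the claim to the elementary fact that a tree is its own unique---and hence minimum---spanning tree, irrespective of the edge weights.

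For part \rmnum{1}), I would proceed as follows. By the definition of $\overline{G}_k^+$ it is a spanning subgraph of $\overline{G}_k$: the vertex sets agree, and every edge of some $LST_k^i$ lies in the neighborhood subgraph $\overline{G}_k^i\subseteq\overline{G}_k$. Since $\overline{G}_k$ is a spanning tree it is in particular connected, so Theorem \ref{theorem10} yields that $\overline{G}_k^+$ is connected. Now a connected spanning subgraph $H$ of a tree $T$ on $n$ vertices satisfies $|E(H)|\geq n-1=|E(T)|$ while $E(H)\subseteq E(T)$, which forces $E(H)=E(T)$; hence $\overline{G}_k^+=\overline{G}_k$. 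A tree is its own unique spanning tree under any weighting, so it is in particular a MST, which proves \rmnum{1}). An equivalent, more constructive route also works and could be mentioned: when $\overline{G}_k$ is a tree, no two neighbors of a node $v_i$ can be mutually adjacent (that would close a triangle in an acyclic graph), so $\overline{G}_k^i$ is a star centered at $v_i$, its LST is that star itself, and superposing these stars over all $i$ returns exactly $\overline{E}_k$.

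For part \rmnum{2}), the key observation is that $d_{\mathrm{max}}\geq\max\{d_{ij}\}$ places every node within transmission range of every other node, so $N^i(\overline{G}_k)=\overline{V}_k$ and the neighborhood subgraph $\overline{G}_k^i$ coincides with the full complete graph $\overline{G}_k$ for every $i$. Consequently each $LST_k^i$, being an LMST by hypothesis, is a minimum spanning tree of one and the same weighted graph $\overline{G}_k$. If that MST is unique---call it $T$---then $LST_k^1=\cdots=LST_k^n=T$, and the superposition defining $\overline{G}_k^+$ collapses to $\overline{G}_k^+=\bigcup_{i=1}^{n}LST_k^i=T$, which is a MST of $\overline{G}_k$, as claimed.

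The main obstacle is precisely the uniqueness of the MST invoked in part \rmnum{2}): if the weight function $f$ produces ties, distinct nodes could legitimately select different minimum spanning trees of $\overline{G}_k$, and their superposition would then contain cycles and fail to be a tree at all. I would dispose of this by stating explicitly the convention that the weights $W_{ij}$ are pairwise distinct---or, more realistically, that ties are broken deterministically using the identifiers $ID(v_i)=i$; either convention induces a strict total order on the edges, and then the classical cut/cycle characterization of minimum spanning trees makes the MST unique and the argument above goes through verbatim. No such care is needed for part \rmnum{1}), whose conclusion is independent of the edge weights.
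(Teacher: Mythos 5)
Your argument follows the same route as the paper's own proof: part \rmnum{1}) via connectivity preservation (Theorem \ref{theorem10}) plus the fact that a connected spanning subgraph of a tree is the tree itself, and part \rmnum{2}) via the observation that every neighborhood subgraph $\overline{G}_k^i$ equals $\overline{G}_k$, so each $LST_k^i$ is an MST of the full graph and the superposition is that MST. Where you go beyond the paper is worthwhile: in part \rmnum{1}) you actually justify the step $\overline{G}_k^+=\overline{G}_k$ with the edge-counting argument $|E(H)|\geq n-1=|E(T)|$, which the paper asserts from connectivity alone without comment; and in part \rmnum{2}) you correctly identify that the argument silently requires the MST of $\overline{G}_k$ to be unique---otherwise distinct nodes could legitimately select different minimum spanning trees and their superposition would contain cycles, so $\overline{G}_k^+$ would be connected and of minimum-spanning weight on each constituent but not itself a tree. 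The paper's proof does not address this; your proposed fix (pairwise distinct weights $W_{ij}$, or deterministic tie-breaking by node ID, either of which forces MST uniqueness by the standard cut/cycle characterization) is exactly the hypothesis that should be made explicit for part \rmnum{2}) to hold as stated.
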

\begin{proof}
For \rmnum{1}), according to Theorem \ref{theorem10}, $\overline{G}_k^+$ preserves the connectivity of $\overline{G}_k$. Since $\overline{G}_k$ is a ST, which yields $\overline{G}_k^+=\overline{G}_k$. It holds that $\overline{G}_k^+$ is exactly the MST in $\overline{G}_k$.

For \rmnum{2}), if the network topology is a complete graph, we have $\overline{G}_k^i=\overline{G}_k$,~$\forall i\in\{1,2,\ldots\}$. Since the corresponding $\overline{G}_k^+$ is derived under LMST, which yields $LST_k^i$ is the MST in $\overline{G}_k$, $\forall i\in \{1,2,\ldots\}$. Since $\overline{G}_k^+$ is the superposition of $LST_k^i (i=1,\ldots,n)$, we have $\overline{G}_k^+$ is exactly a MST in $\overline{G}_k$.
\end{proof}

The second part of Theorem \ref{theorem20} is confirmed by Fig. \ref{figf}. Theorem \ref{theorem20} theoretically answers the question that under what conditions, a MST can be derived in the network topology.
However, in general, the distributed protocol will derive a graph with several redundant edges. Let $d_\mathrm{low}$ be the minimum value of $d_\mathrm{max}$, such that $\overline{G}_k$ is a connected graph. Let $d_\mathrm{upper}=\mathrm{max}\{d_{ij}\}$, where $i,j=1,\ldots,n$. It is shown from simulation that if $\overline{G}_k^+$ is derived under LMST, with the increasing of the transmission range $d_\mathrm{max}\in[d_\mathrm{low}, d_\mathrm{upper}]$, the number of redundant edges in $\overline{G}_k^+$  is first increasing and then decreasing on average, as shown in Fig. \ref{fig20}. A large $d_\mathrm{max}$ generates a dense graph, whereas a small  $d_\mathrm{max}$ generates a sparse graph. More specifically, if $d_\mathrm{max}>d_\mathrm{upper}$, then $\overline{G}_k$ is a complete graph and the corresponding  $\overline{G}_k^+$ is exactly a MST. That is to say, there are no redundant edge in $\overline{G}_k^+$, as shown in Fig. \ref{figf}.

It is evident that the redundant secure links will accelerate the key distribution process, whereas it requires more communication and computation overhead to construct them.
Generally, according to Theorem \ref{theorem20}, $LST_k^i(i=1,\ldots,n)$, where $k=1,2,\ldots$, may not be a LMST. We can expect that the extended Kruskal algorithm, which is a greedy algorithm, can construct \emph{approximate LMST} (ALMST) in each neighborhood subgraph. Consequently, the number of redundant edges in $\overline{G}_k^+$  derived under ALMST is expected to be larger than that derived under LMST on average. It is demonstrated by simulation that if $\overline{G}_k^+$  is derived under LMST, the number of redundant edges in $\overline{G}_k^+$  is relatively small in comparison with the total number of edges in  $\overline{G}_k$. For instance, the total number of edges of  $\overline{G}_k$ in Fig. \ref{figa} is 295, whereas there are only 8 redundant edges in the corresponding $\overline{G}_k^+$, as shown in Fig. \ref{figb}. This fact ensures the high-efficiency of the proposed LST-based distributed protocol.

\section{Protocol Properties Analysis}

The proposed protocols feature the following basic properties.
\begin{itemize}
\item	A hybrid approach, which is the combination of key agreement and key distribution via symmetric encryption, is exploited.
\item	A ST or LSTs are adopted to form a connected network topology.
\item   The extended Kruskal algorithm is employed to handle dynamic events.
\end{itemize}

The group key agreement protocols which own the above three basic properties belong to HSK (H means hybrid, S indicates ST-based, and K implies the usage of the extended Kruskal algorithm) group key agreement scheme. According to the above analysis, our proposed protocols are two typical cases under the HSK scheme.
\section{Informal Security Analysis}

The mobile environment makes the MANETs susceptible to attacks ranging from \emph{passive attacks} to \emph{active attacks}. Passive attacks try to learn information just by listening to the communication of the participants. In active attacks, an attacker attempts to subvert the communication in any way possible: by injecting message, intercepting message, replaying message, altering message, etc..

\subsection{Security against Passive Attacks}

All of the information that is available by passive attackers includes the three cases below.
\begin{enumerate}[\emph{Case} 1:]
\item In \textbf{Procedure S-C}, it is needed to generate secure link key publicly. Thus, the security against passive attacks depends on the selected two-party public key exchange algorithm.
\item In \textbf{Procedure S-K-D}, the transmitted session key is encrypted by a certain symmetric encrypted algorithm using the secure link key. Therefore, the security depends on the selected symmetric algorithm.
\item Other information that transmitted between nodes is useless for passive attacks.
\end{enumerate}

\subsection{Security against Active Attacks}
We have analyzed the security of the proposed protocols under the HSK scheme in an eavesdropper-only model. If the protocols under the HSK scheme are secure in the presence of only passive attackers. Then, we can transform our unauthenticated protocols to secure authenticated protocols by introducing Katz-Yung compiler \cite{katz2003scalable}, see also \cite{dutta2008provably, dutta2011provably} for further details. We will not elaborate it here.

\section{Conclusions}

A centralized protocol and a distributed protocol under the HSK scheme for group key agreement protocol of MANETs have been proposed and analyzed in detail. The extended Kruskal algorithm shows its superiority in achieving the reusability of the precomputed secure links. There are several components in the HSK scheme that should be carefully choose in practical applications, such as the two-party key exchange algorithm and the symmetric encryption algorithm. Remarkably, this paper is not full-fledged. We are going to seek for help to implement the proposed protocols in the real world applications to assess their practical efficiency.

\balance
\bibliographystyle{IEEEtran}
\bibliography{MST}

\begin{thebibliography}{10}
\providecommand{\url}[1]{#1}
\csname url@samestyle\endcsname
\providecommand{\newblock}{\relax}
\providecommand{\bibinfo}[2]{#2}
\providecommand{\BIBentrySTDinterwordspacing}{\spaceskip=0pt\relax}
\providecommand{\BIBentryALTinterwordstretchfactor}{4}
\providecommand{\BIBentryALTinterwordspacing}{\spaceskip=\fontdimen2\font plus
\BIBentryALTinterwordstretchfactor\fontdimen3\font minus
  \fontdimen4\font\relax}
\providecommand{\BIBforeignlanguage}[2]{{%
\expandafter\ifx\csname l@#1\endcsname\relax
\typeout{** WARNING: IEEEtran.bst: No hyphenation pattern has been}%
\typeout{** loaded for the language `#1'. Using the pattern for}%
\typeout{** the default language instead.}%
\else
\language=\csname l@#1\endcsname
\fi
#2}}
\providecommand{\BIBdecl}{\relax}
\BIBdecl

\bibitem{diffie1976new}
W.~Diffie and M.~Hellman, ``New directions in cryptography,'' \emph{Information
  Theory, IEEE Transactions on}, vol.~22, no.~6, pp. 644--654, 1976.

\bibitem{ingemarsson1982conference}
I.~Ingemarsson, D.~Tang, and C.~Wong, ``A conference key distribution system,''
  \emph{Information Theory, IEEE Transactions on}, vol.~28, no.~5, pp.
  714--720, 1982.

\bibitem{burmester1995secure}
M.~Burmester and Y.~Desmedt, ``A secure and efficient conference key
  distribution system,'' in \emph{Advances in Cryptology¡ªEUROCRYPT'94}.\hskip
  1em plus 0.5em minus 0.4em\relax Springer, 1995, pp. 275--286.

\bibitem{steiner1996diffie}
M.~Steiner, G.~Tsudik, and M.~Waidner, ``Diffie-hellman key distribution
  extended to group communication,'' in \emph{Proceedings of the 3rd ACM
  conference on Computer and communications security}.\hskip 1em plus 0.5em
  minus 0.4em\relax ACM, 1996, pp. 31--37.

\bibitem{becker1998communication}
K.~Becker and U.~Wille, ``Communication complexity of group key distribution,''
  in \emph{Proceedings of the 5th ACM conference on Computer and communications
  security}.\hskip 1em plus 0.5em minus 0.4em\relax ACM, 1998, pp. 1--6.

\bibitem{kim2004tree}
Y.~Kim, A.~Perrig, and G.~Tsudik, ``Tree-based group key agreement,'' \emph{ACM
  Transactions on Information and System Security (TISSEC)}, vol.~7, no.~1, pp.
  60--96, 2004.

\bibitem{barua2003extending}
R.~Barua, R.~Dutta, and P.~Sarkar, ``Extending joux¡¯s protocol to multi party
  key agreement,'' in \emph{Progress in Cryptology-INDOCRYPT 2003}.\hskip 1em
  plus 0.5em minus 0.4em\relax Springer, 2003, pp. 205--217.

\bibitem{kim2004constant}
H.-J. Kim, S.-M. Lee, and D.~H. Lee, ``Constant-round authenticated group key
  exchange for dynamic groups,'' in \emph{Advances in Cryptology-ASIACRYPT
  2004}.\hskip 1em plus 0.5em minus 0.4em\relax Springer, 2004, pp. 245--259.

\bibitem{bresson2002dynamic}
E.~Bresson, O.~Chevassut, and D.~Pointcheval, ``Dynamic group diffie-hellman
  key exchange under standard assumptions,'' in \emph{Advances in
  Cryptology¡ªEUROCRYPT 2002}.\hskip 1em plus 0.5em minus 0.4em\relax Springer,
  2002, pp. 321--336.

\bibitem{bresson2004mutual}
E.~Bresson, O.~Chevassut, A.~Essiari, and D.~Pointcheval, ``Mutual
  authentication and group key agreement for low-power mobile devices,''
  \emph{Computer Communications}, vol.~27, no.~17, pp. 1730--1737, 2004.

\bibitem{dutta2005constant}
R.~Dutta and R.~Barua, ``Constant round dynamic group key agreement,'' in
  \emph{Information Security}.\hskip 1em plus 0.5em minus 0.4em\relax Springer,
  2005, pp. 74--88.

\bibitem{kim2000simple}
Y.~Kim, A.~Perrig, and G.~Tsudik, ``Simple and fault-tolerant key agreement for
  dynamic collaborative groups,'' in \emph{Proceedings of the 7th ACM
  conference on Computer and communications security}.\hskip 1em plus 0.5em
  minus 0.4em\relax ACM, 2000, pp. 235--244.

\bibitem{kim2001communication}
------, ``Communication-efficient group key agreement.'' in \emph{Sec}.\hskip
  1em plus 0.5em minus 0.4em\relax Springer, 2001, pp. 229--244.

\bibitem{steiner1998cliques}
M.~Steiner, G.~Tsudik, and M.~Waidner, ``Cliques: A new approach to group key
  agreement,'' in \emph{Distributed Computing Systems, 1998. Proceedings. 18th
  International Conference on}.\hskip 1em plus 0.5em minus 0.4em\relax IEEE,
  1998, pp. 380--387.

\bibitem{sherman2003key}
A.~T. Sherman and D.~A. McGrew, ``Key establishment in large dynamic groups
  using one-way function trees,'' \emph{Software Engineering, IEEE Transactions
  on}, vol.~29, no.~5, pp. 444--458, 2003.

\bibitem{dutta2008provably}
R.~Dutta and R.~Barua, ``Provably secure constant round contributory group key
  agreement in dynamic setting,'' \emph{Information Theory, IEEE Transactions
  on}, vol.~54, no.~5, pp. 2007--2025, 2008.

\bibitem{steiner2000key}
M.~Steiner, G.~Tsudik, and M.~Waidner, ``Key agreement in dynamic peer
  groups,'' \emph{Parallel and Distributed Systems, IEEE Transactions on},
  vol.~11, no.~8, pp. 769--780, 2000.

\bibitem{dutta2005overview}
R.~Dutta and R.~Barua, ``Overview of key agreement protocols,'' \emph{IACR
  Cryptology ePrint Archive}, vol. 2005, p. 289, 2005.

\bibitem{hietalahti2001efficient}
M.~Hietalahti, ``Efficient key agreement for ad-hoc networks,'' 2001.

\bibitem{li2002efficient}
X.-Y. Li, Y.~Wang, and O.~Frieder, ``Efficient hybrid key agreement protocol
  for wireless ad hoc networks,'' in \emph{Computer Communications and
  Networks, 2002. Proceedings. Eleventh International Conference on}.\hskip 1em
  plus 0.5em minus 0.4em\relax IEEE, 2002, pp. 404--409.

\bibitem{abdel2007authenticated}
A.~Abdel-Hafez, A.~Miri, and L.~Orozco-Barbosa, ``Authenticated group key
  agreement protocols for ad hoc wireless networks.'' \emph{IJ Network
  Security}, vol.~4, no.~1, pp. 90--98, 2007.

\bibitem{hietalahti2008clustering}
M.~Hietalahti, ``A clustering-based group key agreement protocol for ad-hoc
  networks,'' \emph{Electronic Notes in Theoretical Computer Science}, vol.
  192, no.~2, pp. 43--53, 2008.

\bibitem{konstantinou2008cluster}
E.~Konstantinou, ``Cluster-based group key agreement for wireless ad hoc
  networks,'' in \emph{Availability, Reliability and Security, 2008. ARES 08.
  Third International Conference on}.\hskip 1em plus 0.5em minus 0.4em\relax
  IEEE, 2008, pp. 550--557.

\bibitem{yao2003making}
G.~Yao, K.~Ren, F.~Bao, R.~H. Deng, and D.~Feng, ``Making the key agreement
  protocol in mobile ad hoc network more efficient,'' in \emph{Applied
  Cryptography and Network Security}.\hskip 1em plus 0.5em minus 0.4em\relax
  Springer, 2003, pp. 343--356.

\bibitem{dutta2011provably}
R.~Dutta and T.~Dowling, ``Provably secure hybrid key agreement protocols in
  cluster-based wireless ad hoc networks,'' \emph{Ad Hoc Networks}, vol.~9,
  no.~5, pp. 767--787, 2011.

\bibitem{shi2006authenticated}
H.~Shi, M.~He, and Z.~Qin, ``Authenticated and communication efficient group
  key agreement for clustered ad hoc networks,'' in \emph{Cryptology and
  Network Security}.\hskip 1em plus 0.5em minus 0.4em\relax Springer, 2006, pp.
  73--89.

\bibitem{yu2005survey}
J.~Y. Yu and P.~H.~J. Chong, ``A survey of clustering schemes for mobile ad hoc
  networks,'' \emph{IEEE Communications Surveys and Tutorials}, vol.~7, no.
  1-4, pp. 32--48, 2005.

\bibitem{kruskal1956shortest}
J.~B. Kruskal, ``On the shortest spanning subtree of a graph and the traveling
  salesman problem,'' \emph{Proceedings of the American Mathematical society},
  vol.~7, no.~1, pp. 48--50, 1956.

\bibitem{prim1957shortest}
R.~C. Prim, ``Shortest connection networks and some generalizations,''
  \emph{Bell system technical journal}, vol.~36, no.~6, pp. 1389--1401, 1957.

\bibitem{menezes2010handbook}
A.~J. Menezes, P.~C. Van~Oorschot, and S.~A. Vanstone, \emph{Handbook of
  applied cryptography}.\hskip 1em plus 0.5em minus 0.4em\relax CRC press,
  2010.

\bibitem{li2003design}
N.~Li, J.~C. Hou, and L.~Sha, ``Design and analysis of an mst-based topology
  control algorithm,'' in \emph{INFOCOM 2003. Twenty-Second Annual Joint
  Conference of the IEEE Computer and Communications. IEEE Societies},
  vol.~3.\hskip 1em plus 0.5em minus 0.4em\relax IEEE, 2003, pp. 1702--1712.

\bibitem{katz2003scalable}
J.~Katz and M.~Yung, ``Scalable protocols for authenticated group key
  exchange,'' in \emph{Advances in Cryptology-CRYPTO 2003}.\hskip 1em plus
  0.5em minus 0.4em\relax Springer, 2003, pp. 110--125.

\end{thebibliography}

\ifCLASSOPTIONcaptionsoff
  \newpage
\fi

\end{document}